\providecommand{\U}[1]{\protect\rule{.1in}{.1in}}
\newtheorem{theorem}{Theorem}
\newtheorem{lemma}{Lemma}
\begin{document}

\title{Analysis of Alternative Metrics for the PAPR Problem in OFDM Transmission}
\author{Gerhard Wunder\\Fraunhofer Heinrich Hertz Institut\\ Berlin, Germany \\{\small Email: gerhard.wunder@hhi.fraunhofer.de}}
\maketitle

\begin{abstract}
The effective PAPR of the transmit signal is the standard metric to capture
the effect of nonlinear distortion in OFDM transmission. A common rule of
thumb is the log$\left(  N\right)  $ barrier where $N$ is the number of
subcarriers which has been theoretically analyzed by many authors. Recently,
new alternative metrics have been proposed in practice leading potentially to
different system design rules which are theoretically analyzed in this paper.
One of the main findings is that, most surprisingly, the log$\left(  N\right)
$ barrier turns out to be much too conservative: e.g. for the so-called
amplifier-oriented metric the scaling is rather $\log\left[  \log\left(
N\right)  \right]  $. To prove this result, new upper bounds on the PAPR
distribution for coded systems are presented as well as a theorem relating
PAPR results to these alternative metrics.

\end{abstract}

\section{Introduction}

The peak-to-average power ratio (PAPR) problem is a well-established problem
in OFDM literature and has entailed numerous research papers since the mid
nineties \cite{litsyn_07}. Nowadays, even though OFDM has become the
predominant wireless technology in the downlink, there are still many concerns
about the application of OFDM in the uplink. This is mainly due to the fact
that the PAPR reduction capabilities of state-of-the-art algorithms and their
respective impact on relevant performance measures such as power efficiency,
error probability, and spectral regrowth are not easy to track and mostly
presented in terms of simulations. This situation is indeed dissatisfactory
for system design, where provable performance limits are required. Another
important driving factor within the context of \emph{Green Information
Technology} is the growing energy cost of network operation setting standards
beyond capabilities of current PAPR reduction algorithms \cite{correia_10}.
Hence, it becomes more and more apparent that the problem can not be
considered as solved yet and that the PAPR metric itself has to be carefully
reviewed overthrowing some of the common understanding and results
particularly in the context of MIMO
\cite{fischer_09_inf,siegl_2010,pohl_07_ett}.

This paper revisits the PAPR problem and analyzes new performance metrics in
terms of their effective behaviour. Standard results suggest that power
amplifier backoff is to be adjusted along the $\log\left(  N\right)  $ rule of
thump where $N$ is the number of subcarriers
\cite{dinur_01_comm,ochiai_01_comm}. However, high but very narrow peaks
obviously cause spectral regrowth but the effect on e.g. symbol error
probability might be negligible which suggests that amplifier backoff and PAPR
of the transmit signal can indeed fall apart while, still, zero symbol error
probability can be achieved. This motivates the analysis of alternative
performance measures recently proposed in practice \cite{3gpp_04}. The
detailed contributions are as follows:

\textbf{Contributions:} First, we provide a new analytical upper bound on the
PAPR distribution for coded OFDM systems generalizing some known results in
the literature. The theorems are used to bound some given alternative
performance metric introducing the so-called balancing method. In this context
we prove that even though PAPR is of order $\log\left(  N\right)  $ with high
probability, the amplifier backoff can be adjusted according to a much lower
value. Specifically, for the so-called amplifier oriented metric the scaling
turns out to be of order $\log\left[  \log\left(  N\right)  \right]  $ which
is almost a constant in practical terms and suggests new system design rules.

\section{The Communication Model}

Let us introduce coded OFDM systems. We adopt the system model introduced by
\cite{paterson_00_inf2}: Let $\mathcal{C}$ be a code that maps $k_{b}$ input
bits into blocks of $N$ constellation symbols $c_{0},\ldots,c_{N-1}$, from a
complex constellation $\mathcal{Q}$ forming the codeword $\mathbf{c}$. We
assume here $\mathcal{Q}:=\{-1,1\}=$ \textit{BPSK}. The rate $R$ of this code
is defined to be $R=k_{b}/N$ such that $\mathcal{C}$ has $M_{1}=2^{RN}$ codewords.

Given a codeword $\mathbf{c}$, a single OFDM baseband symbol can be described
by
\begin{equation}
S_{\mathbf{c}}\left(  t\right)  =\frac{1}{\sqrt{N}}\sum_{k=0}^{N-1}%
c_{k}e^{2\pi jk\Delta ft},\quad0\leq t\leq T_{s},j=\sqrt{-1}
\label{eqn:ofdm_symbol}%
\end{equation}
where $N$ is the number of subcarriers, $\Delta f=1/T_{s}$ is the subcarrier
frequency offset and $T_{s}$ is the symbol duration. For mathematical
convenience the time axis can be normalized by $T_{s}$, i.e. we substitute
$\theta\left(  t\right)  =2\pi t/T_{s}$ and write $S_{\mathbf{c}}\left(
\theta\right)  ,0\leq\theta\leq2\pi$. Furthermore, for later reference define
$S_{\mathbf{c}}\left(  \theta,\alpha\right)  :=\Re e\left(  S_{\mathbf{c}%
}\left(  \theta\right)  e^{j\alpha}\right)  $ with sampling points%
\[
\theta_{l,L}:=\frac{2\pi l}{2LN},\alpha_{l,K}:=\frac{l2\pi}{K},
\]
which are collected in the two-dimensional lattice%
\[
\Omega_{L,K}:=\left\{  \left(  \theta_{l_{1},L},\alpha_{l_{2},K}\right)
,\;0\leq l_{1}<LN,0\leq l_{2}<K\right\}  ,
\]
of the square $\left[  0,2\pi\right)  \times\left[  0,2\pi\right)  $. Here,
$L>1$ is the oversampling factor and $K>2$ is some auxiliary variable. The
Nyquist-rate samples are $\theta_{l}:=\theta_{l,1}$. In the baseband model the
OFDM symbols undergo a nonlinear transformation denoted as%
\[
\Phi:S_{\mathbf{c}}\left(  \theta\right)  \hookrightarrow\Phi\left[
S_{\mathbf{c}}\left(  \theta\right)  \right]
\]
representing some high power amplifier (HPA) model. In the sequel, we assume
for simplicity that the nonlinearity acts solely on the samples obtained with
some oversampling factor $L$.

\section{HPA Models}

\subsection{Soft envelope limiter model}

The soft envelope limiter (SEL) model is given by%
\[
\Phi_{sel}\left(  S_{\mathbf{c}}\left(  \theta\right)  \right)  =\left\{
\begin{array}
[c]{ll}%
S_{\mathbf{c}}\left(  \theta\right)  , & \left\vert S_{\mathbf{c}}\left(
\theta\right)  \right\vert \leq\lambda\\
\lambda e^{j\arg\left(  S_{\mathbf{c}}\left(  \theta\right)  \right)  }, &
\left\vert S_{\mathbf{c}}\left(  \theta\right)  \right\vert >\lambda
\end{array}
\right.  ,
\]
where $\lambda$ is the saturation level of the non-linearity and the event
$\left\{  \left\vert S_{\mathbf{c}}\left(  \theta\right)  \right\vert
>\lambda\right\}  $ is commonly described as clipping. The samples after the
SEL nonlinearity can be decomposed as%
\[
\Phi\left(  S_{\mathbf{c}}\left(  \theta_{l,L}\right)  \right)  =S_{\mathbf{c}%
}\left(  \theta_{l,L}\right)  +D_{\mathbf{c}}\left(  \theta_{l,L}\right)
\]
and obviously $D_{\mathbf{c}}\left(  \theta_{l,L}\right)  =0$ $\forall l$ when
no clipping occurs. The SEL model is a standard model when there are
additional predistortion techniques.

\subsection{Cubic polynomial model}

Particular in the 3GPP\ context \cite{3gpp_04,siegl_2010} the cubic model has
become popular and is given by%
\[
\Phi_{cu}\left(  S_{\mathbf{c}}\left(  \theta\right)  \right)  =a\cdot
S_{\mathbf{c}}\left(  \theta\right)  +b\cdot S_{\mathbf{c}}\left(
\theta\right)  \left\vert S_{\mathbf{c}}\left(  \theta\right)  \right\vert
^{2},\;a,b>0.
\]
The advantage is in most cases a simpler analytical treatment.

\section{Figures of Merit}

\subsection{Crest-factor}

The crest-factor\ (CF)\footnote{We consider CF instead of PAPR.} of eqn.
(\ref{eqn:ofdm_symbol}) is defined by
\begin{equation}
\text{\textit{CF}}_{L}\left(  S_{\mathbf{c}}\right)  :=\max_{0\leq
l<LN}\left\vert S_{\mathbf{c}}\left(  \theta_{l,L}\right)  \right\vert
\label{eqn:cf}%
\end{equation}
with $1\leq$\textit{CF}$_{L}\left(  S_{\mathbf{c}}\right)  \leq\sqrt{N}$. A
first choice to assess the impact of a nonlinearity in the transmitter path
would be the maximum CF taken over all codewords (that we call the CF of a
code) defined by
\[
\text{\textit{CF}}_{L}\left(  \mathcal{C}\right)  :=\max_{c\in\mathcal{C}%
}\text{\textit{CF}}_{L}\left(  S_{\mathbf{c}}\right)  .
\]
It was shown in \cite{wunder_02_itg,tarokh_00_comm} that for spherical codes
the CF of a code can be computed with arbitrary accuracy provided that the
code supports minimum-distance decoding. On the other hand, the occurrence of
this ,,worst-case\textquotedblleft\ codeword may be extremely unlikely. In
this case, the distribution of the CF must be taken into account. The
complementary cumulative distribution function (CCDF) of (\ref{eqn:cf}) is
defined by
\[
B_{L}\left(  x\right)  :=\Pr\left(  \left\{  \text{\textit{CF}}_{L}\left(
S_{\mathbf{c}}\right)  >x;c\in\mathcal{C}\right\}  \right)  .
\]
($\Pr$ denotes probability). Using the CCDF a more appropriate measure can be
defined such as the ,,effective\textquotedblleft\ CF defined by the CF of
which the probability of occurrence may be considered negligible in practice,
i.e.%
\[
B_{L}\left(  \text{\textit{CF}}_{eff}\left(  \mathcal{C}\right)  \right)
=\epsilon
\]
where \textit{CF}$_{eff}\left(  \mathcal{C}\right)  $ is the effective CF and
$\epsilon$ is some small number, say $10^{-3}...10^{-8}$ (outage probability).
Our aim is to bound this term for codes.

\subsection{Amplifier-oriented metric}

The definition of $D_{\mathbf{c}}\left(  \theta\right)  $ suggest the
following metric%
\[
\text{\textit{AOM}}_{L}\left(  S_{\mathbf{c}}\right)  :=\frac{1}{LN}\sum
_{l=0}^{LN-1}\left\vert D_{\mathbf{c}}\left(  \theta_{l,L}\right)  \right\vert
^{2}%
\]
In the following we will see how we can relate this metric to $B_{L}$.

\section{Fundamentals on CF Distribution}

The general approach is as follows: suppose the probabilities
\[
\Pr\left(  S_{\mathbf{c}}\left(  \theta,\alpha\right)  >x\right)
\]
are given where the tuple $\left(  \theta,\alpha\right)  $ runs through
$\Omega_{L,K}$. By the method of projections and union bound we have
\[
B_{L}\left(  x\right)  \leq\min_{K>2}\sum_{\left(  \theta,\alpha\right)
\in\Omega_{L,K}}\Pr\left(  S_{\mathbf{c}}\left(  \theta,\alpha\right)
>\frac{x}{C_{K}}\right)
\]
where $C_{K}:=\cos^{-1}\left(  \frac{\pi}{K}\right)  ,K\geq3$. Thus, all we
have to do is to bound the distribution of the instantaneous envelope for the
different modulation schemes. Thus, we replace the probability terms with the
Chernoff (or any other Marcov style) bound, i.e.
\begin{equation}
\Pr\left(  S_{\mathbf{c}}\left(  \theta,\alpha\right)  >x\right)
\leq\mathbb{E}\left(  e^{\varrho\left(  S_{\mathbf{c}}\left(  \theta
,\alpha\right)  -x\right)  }\right)  \label{eqn:chernoff}%
\end{equation}
for any (real) $\varrho>0$ ($\mathbb{E}\left(  \cdot\right)  $ is the
expectation operator). We call the bound (\ref{eqn:chernoff}) a union bound on
the CCDF of the CF.

We also need some elements from coding theory. Let $d\left(  \mathbf{c}%
^{\prime},\mathbf{c}^{\prime\prime}\right)  $ be the \emph{Hamming distance}
between codewords $\mathbf{c}^{\prime},\mathbf{c}^{\prime\prime}$, i.e. the
number of positions where the codewords differ. The distance distribution is
then given by
\[
W_{k}^{%
{{}^\circ}%
}:=\frac{\left\vert \left\{  \mathbf{c}^{\prime},\mathbf{c}^{\prime\prime}%
\in\mathcal{C}:d\left(  \mathbf{c}^{\prime},\mathbf{c}^{\prime\prime}\right)
=k\right\}  \right\vert }{M_{1}}.
\]
Note that since for a linear code it does not matter which individual codeword
we pick when calculating the distance to another codeword, the distance
distribution coincides with the weight distribution $W_{k}$ for linear codes,
i.e.
\[
W_{k}:=\left\vert \left\{  \mathbf{c}:w\left(  \mathbf{c}\right)
=k,\mathbf{c}\in\mathcal{C}\right\}  \right\vert
\]
where $w\left(  \cdot\right)  $ denotes the weight of a codeword. Furthermore,
if the code contains the all-one codeword (i.e. all components are negative
under our identification of constellation symbols) we have $W_{k}=W_{N-k}$ and
the weight distribution becomes \emph{symmetric}.

The main purpose of the following derivations is to prove an interesting
connection between CF distributions and distance distributions. Since many
results require symmetric weight and distance distributions we start with the
following lemma. It says that the CF distribution can be estimated by its
symmetrized version.

\begin{lemma}
\label{lemma:help} Suppose $\mathcal{C}_{A}$ is a binary code. Then, for any
set $\mathcal{A}:=\left\{  \mathbf{c}\in\mathcal{C}_{A}:\text{\textit{CF}%
}\left(  S_{\mathbf{c}}\right)  >x\right\}  $ the probability that this occurs
is upperbounded by
\[
\frac{\left\vert \mathcal{A}\right\vert }{\left\vert \mathcal{C}%
_{A}\right\vert }\leq2\frac{\left\vert \mathcal{A}\cup\mathcal{B}\right\vert
}{\left\vert \mathcal{C}_{A}\cup\mathcal{C}_{B}\right\vert }\leq
4\frac{\left\vert \mathcal{A}\right\vert }{\left\vert \mathcal{C}%
_{A}\right\vert },
\]
where $\mathcal{C}_{B}$ is the binary code constructed by adding the all-one
codeword to any codeword of $\mathcal{C}_{A}$ and $\mathcal{B}:=\left\{
\mathbf{c}\in\mathcal{C}_{B}:\text{\textit{CF}}\left(  S_{\mathbf{c}}\right)
>x\right\}  $.
\end{lemma}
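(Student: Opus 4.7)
The plan is to exploit the fact that, under the BPSK identification $\{0,1\}\mapsto\{-1,+1\}$, adding the all-one codeword in the binary domain corresponds to global sign-inversion of the constellation vector, which in turn only negates the complex baseband signal: $S_{-\mathbf{c}}(\theta)=-S_{\mathbf{c}}(\theta)$. Since CF is defined via the modulus in (\ref{eqn:cf}), we immediately obtain the invariance $\text{\textit{CF}}(S_{-\mathbf{c}})=\text{\textit{CF}}(S_{\mathbf{c}})$. This is the only structural ingredient needed.

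First I would formalise the map $\sigma:\mathcal{C}_A\to\mathcal{C}_B$, $\mathbf{c}\mapsto -\mathbf{c}$, and note that $\sigma$ is a bijection so $|\mathcal{C}_A|=|\mathcal{C}_B|$. By the CF invariance above, $\sigma$ restricts to a bijection $\mathcal{A}\to\mathcal{B}$, hence $|\mathcal{A}|=|\mathcal{B}|$. Next I would record the two elementary double inclusions obtained from $\mathcal{A}\subseteq\mathcal{A}\cup\mathcal{B}$ and $\mathcal{C}_A\subseteq\mathcal{C}_A\cup\mathcal{C}_B$, namely
\[
|\mathcal{A}|\;\leq\;|\mathcal{A}\cup\mathcal{B}|\;\leq\;|\mathcal{A}|+|\mathcal{B}|=2|\mathcal{A}|,
\]
\[
|\mathcal{C}_A|\;\leq\;|\mathcal{C}_A\cup\mathcal{C}_B|\;\leq\;|\mathcal{C}_A|+|\mathcal{C}_B|=2|\mathcal{C}_A|.
\]

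For the left inequality of the lemma, I would combine the upper bound on $|\mathcal{C}_A\cup\mathcal{C}_B|$ with the lower bound on $|\mathcal{A}\cup\mathcal{B}|$: this gives $|\mathcal{A}\cup\mathcal{B}|/|\mathcal{C}_A\cup\mathcal{C}_B|\geq |\mathcal{A}|/(2|\mathcal{C}_A|)$. For the right inequality I would do the opposite pairing, using $|\mathcal{A}\cup\mathcal{B}|\leq 2|\mathcal{A}|$ and $|\mathcal{C}_A\cup\mathcal{C}_B|\geq|\mathcal{C}_A|$. Rearranging yields exactly the claimed chain.

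Honestly, there is no real obstacle here: the content of the lemma is the symmetry observation, and once that is in place the arithmetic is a two-line application of inclusion bounds. The only subtlety worth stressing in the write-up is that $\mathcal{C}_A\cap\mathcal{C}_B$ need not be empty (e.g.\ if $\mathcal{C}_A$ already contains some pair $\{\mathbf{c},-\mathbf{c}\}$), but both inequalities in the lemma are proved by bounds that are valid irrespective of whether this intersection is empty, so no case distinction is required.
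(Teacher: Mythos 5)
Your proof is correct and follows essentially the same route as the paper: the paper likewise observes $|\mathcal{A}|=|\mathcal{B}|$ and $|\mathcal{C}_A|=|\mathcal{C}_B|$ and then applies the same inclusion bounds $|\mathcal{C}_A\cup\mathcal{C}_B|\leq 2|\mathcal{C}_A|$ and $|\mathcal{A}\cup\mathcal{B}|\leq 2|\mathcal{A}|$ in the same pairings. Your explicit justification of $|\mathcal{A}|=|\mathcal{B}|$ via the sign-flip invariance of the CF is a welcome detail the paper leaves implicit, but it is not a different argument.
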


\begin{proof}
First, observe that $\left\vert \mathcal{C}_{A}\right\vert =\left\vert
\mathcal{C}_{B}\right\vert $ and $\left\vert \mathcal{A}\right\vert
=\left\vert \mathcal{B}\right\vert $. Furthermore we have
\[
\left\vert \mathcal{C}_{A}\cup\mathcal{C}_{B}\right\vert \leq2\left\vert
\mathcal{C}_{A}\right\vert
\]
so that we get%
\[
\left\vert \mathcal{C}_{A}\right\vert \geq\frac{\left\vert \mathcal{C}_{A}%
\cup\mathcal{C}_{B}\right\vert }{2}.
\]
Hence, we have
\[
\frac{\left\vert \mathcal{A}\right\vert }{\left\vert \mathcal{C}%
_{A}\right\vert }\leq\frac{\left\vert \mathcal{A}\cup\mathcal{B}\right\vert
}{\left\vert \mathcal{C}_{A}\right\vert }\leq2\frac{\left\vert \mathcal{A}%
\cup\mathcal{B}\right\vert }{\left\vert \mathcal{C}_{A}\cup\mathcal{C}%
_{B}\right\vert }.
\]
The converse is%
\[
\frac{\left\vert \mathcal{A}\cup\mathcal{B}\right\vert }{\left\vert
\mathcal{C}_{A}\cup\mathcal{C}_{B}\right\vert }\leq\frac{2\left\vert
\mathcal{A}\right\vert }{\left\vert \mathcal{C}_{A}\cup\mathcal{C}%
_{B}\right\vert }\leq\frac{2\left\vert \mathcal{A}\right\vert }{\left\vert
\mathcal{C}_{A}\right\vert }%
\]
proving the claim.
\end{proof}

Now, we are ready for our first theorem which generalizes \cite[Thm.
6.16]{litsyn_07}.

\begin{theorem}
\label{theorem:cdf_code_nl} For any binary code $\mathcal{C}$ the CCDF of the
CF is upperbounded by
\[
B_{L}\left(  x\right)  \leq\min_{\varrho>0}\sqrt{\sum_{k=0}^{N}\frac{f^{\ast
}\left(  \varrho,x\right)  W_{k}^{%
{{}^\circ}%
}\cosh\left(  \varrho N^{\frac{1}{4}}\left(  N-2k\right)  \right)  }{M_{1}}}%
\]
where
\[
f^{\ast}\left(  \varrho,x\right)  :=\min_{K>2}2LNK\exp\left(  -\frac
{\varrho\sqrt{N}x}{C_{K}}\right)  \sqrt{\cosh\left(  \varrho N^{\frac{3}{4}%
}\right)  }.
\]

\end{theorem}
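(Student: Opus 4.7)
The idea is to combine the projection-plus-union-bound framework already developed before the theorem with a pairing (Cauchy--Schwarz) step that brings the distance distribution $W_k^\circ$ into play, using the symmetrization of Lemma~\ref{lemma:help} so that the resulting one-sided exponentials can be replaced by the $\cosh$ terms of the statement.

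First I would invoke Lemma~\ref{lemma:help} to pass from $\mathcal{C}$ to the augmented code $\mathcal{C}\cup(\mathcal{C}+\Id)$, paying only a multiplicative factor $2$ but, crucially, making the codebook closed under the antipode $\mathbf{c}\mapsto -\mathbf{c}$. Next, by the method of projections laid out just before the theorem, the continuous peak is controlled by $\max_{(\theta,\alpha)\in\Omega_{L,K}}S_{\mathbf{c}}(\theta,\alpha)$ at cost $C_K$, and Markov's inequality with scaled parameter $\varrho\sqrt{N}$ combined with $e^{\max}\le\sum e$ yields
\[
B_L(x)\le 2\,e^{-\varrho\sqrt{N}x/C_K}\sum_{(\theta,\alpha)\in\Omega_{L,K}}\mathbb{E}_{\mathbf{c}}\bigl[e^{\varrho\sqrt{N}S_{\mathbf{c}}(\theta,\alpha)}\bigr].
\]

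To obtain the $\sqrt{\,\cdot\,}$ form of the theorem, I would then apply Cauchy--Schwarz twice: once on the outer sum, $\sum_i a_i\le\sqrt{|\Omega_{L,K}|}\,\sqrt{\sum_i a_i^2}$ with $|\Omega_{L,K}|=LNK$, and once on each square $a_i^2=\mathbb{E}_{\mathbf{c}}[X]^2=\mathbb{E}_{\mathbf{c},\mathbf{c}'}[X(\mathbf{c})X(\mathbf{c}')]$ for an independent copy $\mathbf{c}'$ and $X=e^{\varrho\sqrt{N}S_{\mathbf{c}}(\theta,\alpha)}$. Decomposing by the Hamming distance $d(\mathbf{c},\mathbf{c}')=k$ yields
\[
\mathbb{E}_{\mathbf{c}}\!\bigl[e^{\varrho\sqrt{N}S_{\mathbf{c}}(\theta,\alpha)}\bigr]^2=\sum_k\frac{W_k^\circ}{M_1}\mathbb{E}_{(\mathbf{c},\mathbf{c}')\mid d=k}\!\bigl[e^{\varrho\sum_m(c_m+c'_m)\cos(m\theta+\alpha)}\bigr],
\]
and the antipode invariance secured in the first step turns every conditional $e^{(\cdot)}$ into its $\cosh$ counterpart, because $(-\mathbf{c},-\mathbf{c}')$ lies in the same distance class and flips the sign of the exponent.

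The real work, and the step I expect to be the main obstacle, is bounding each conditional pair expectation by $\cosh(\varrho N^{1/4}(N-2k))\sqrt{\cosh(\varrho N^{3/4})}$. A naive Cauchy--Schwarz on the $N-k$ agreement positions gives only $\cosh(\varrho\sqrt{N(N-k)})$, which is far too coarse. The $N^{1/4}$ scaling ought to come from a \emph{balancing} split of the exponent $\varrho\sum_m(c_m+c'_m)\cos(m\theta+\alpha)$ into a mean-like component, tied to the pair correlation $\sum_m c_m c'_m=N-2k$ and responsible for the $\cosh(\varrho N^{1/4}(N-2k))$ factor, and a bounded fluctuation component, tied to $\sum_m\cos^2(m\theta+\alpha)\le N$ and controlled by a Hoeffding-type estimate to yield the $\sqrt{\cosh(\varrho N^{3/4})}$ factor; the exponent $1/4$ is chosen so that the two contributions balance. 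Assembling the pieces, the factor $2$ from Lemma~\ref{lemma:help} and $|\Omega_{L,K}|=LNK$ combine into the prefactor $2LNK$ of $f^*$, and minimizing over $K>2$ inside $f^*$ and over $\varrho>0$ outside produces the stated bound, which specializes to \cite[Thm.~6.16]{litsyn_07} in the uncoded case $\mathcal{C}=\{\pm1\}^N$.
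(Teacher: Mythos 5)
There is a genuine gap, and you have located it yourself: the bound on the conditional pair expectation by $\cosh\left(\varrho N^{1/4}(N-2k)\right)\sqrt{\cosh\left(\varrho N^{3/4}\right)}$ is exactly the content of the theorem, and your sketch of the ``balancing split'' does not establish it. Worse, as a \emph{per-distance-class} statement it is false for a general binary code. After conditioning on $d(\mathbf{c},\mathbf{c}')=k$, the exponent is $\varrho\sum_{m}(c_m+c'_m)\cos(m\theta+\alpha)=2\varrho\sum_{m\in A}c_m\cos(m\theta+\alpha)$ where $A$ is the agreement set of size $N-k$; this is controlled by $N-k$, not by the correlation $N-2k$. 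Take a code containing two codewords at distance $k=N/2$ whose common part is sign-aligned with $\cos(m\theta_0+\alpha_0)$ at some lattice point: the conditional pair expectation there is of order $e^{c\varrho N}$, while your claimed bound for that class is $\cosh(0)\sqrt{\cosh(\varrho N^{3/4})}\approx e^{\varrho N^{3/4}/2}$. The theorem itself survives for such a code only because the $k=0$ term $\cosh(\varrho N^{1/4}\cdot N)$ absorbs the damage --- i.e., the inequality holds in the aggregate over distance classes, not class by class, so no pointwise estimate of the kind you are aiming for can exist.

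The paper's proof avoids this trap by never conditioning on pairs at fixed distance. It Taylor-expands the Chernoff bound to order $N_1$, applies Cauchy--Schwarz at each order $j$ over the multi-index sums $\sum_{\mathbf{i}\in\mathcal{I}_j}b_{\mathbf{i}}^{(j)}(\cdot)$, separating the trigonometric kernel (which sums exactly to $\Psi_N^j(\theta,\alpha)=\left(\sum_k\cos^2(k\theta+\alpha)\right)^j\leq N^j$) from the squared code moments, and then invokes the combinatorial identity $\sum_{\mathbf{i}\in\mathcal{I}_j}b_{\mathbf{i}}^{(j)}\left(\sum_{\mathbf{c}}\prod_k c_k^{i_k}\right)^2=M_1\sum_k(N-2k)^jW_k^{\circ}$ --- an aggregate identity over all multi-indices, which is where $N-2k$ legitimately enters. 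A second Cauchy--Schwarz splitting $\Psi_N^{j}=\Psi_N^{3j/4}\cdot\Psi_N^{j/4}$ and resummation as $N_1\to\infty$ produce the two $\cosh$ factors with exponents $3/4$ and $1/4$. Your outer framing (Lemma~\ref{lemma:help} for symmetrization, projection onto $\Omega_{L,K}$, Chernoff with parameter $\varrho\sqrt{N}$) is consistent with the paper, but the heart of the argument needs to be the moment/Taylor-order decomposition rather than the pair-correlation decomposition.
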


\begin{proof}
We first assume codes with symmetric weight distributions. For ease of
presentation let us define
\[
b_{\mathbf{i}}^{\left(  j\right)  }:=\frac{j!}{i_{0}!i_{1}!\cdots i_{N-1}!}%
\]
and
\[
k_{\mathbf{i}}\left(  \theta,\alpha\right)  :=\left(  \cos^{i_{0}}\left(
\alpha\right)  ,\ldots,\cos^{i_{N-1}}\left(  \left(  N-1\right)  \theta
+\alpha\right)  \right)  ,
\]
as well as the code moments%
\[
M_{\mathbf{i}}:=\mathbb{E}\left(  c_{0}^{i_{0}}c_{1}^{i_{1}}\ldots
c_{N-1}^{i_{N-1}}\right)  .
\]
Fixing $\varrho>0,L>1,K>2,N_{1}>1$, and expanding the exponential function in
the Chernoff bound yields
\[
B_{L}\left(  x\right)  \lesssim\sum_{\left(  \theta,\alpha\right)  \in
\Omega_{L,K}}e^{-\frac{\varrho\sqrt{N}x}{C_{K}}}\sum_{j=0}^{N_{1}}%
\frac{\varrho^{j}}{j!}\sum_{\mathbf{i}\in\mathcal{I}_{j}}b_{\mathbf{i}%
}^{\left(  j\right)  }k_{\mathbf{i}}\left(  \theta,\alpha\right)
M_{\mathbf{i}}%
\]
where we have omitted the error term (indicated by $\lesssim$) on the right
hand side which depends on the natural number $N_{1}>0$ and is given by
Taylor's theorem. Applying the Cauchy-Schwartz's inequality yields
\begin{align*}
B_{L}\left(  x\right)   &  \lesssim\sum_{\left(  \theta,\alpha\right)
\in\Omega_{L,K}}e^{-\frac{\varrho\sqrt{N}x}{C_{K}}}\\
&  \sum_{j=0}^{N_{1}}\frac{\varrho^{j}}{j!}\left(  \sum_{\mathbf{i}%
\in\mathcal{I}_{j}}b_{\mathbf{i}}^{\left(  j\right)  }k_{\mathbf{i}}%
^{2}\left(  \theta,\alpha\right)  \right)  ^{\frac{1}{2}}\left(
\sum_{\mathbf{i}\in\mathcal{I}_{j}}b_{\mathbf{i}}^{\left(  j\right)
}M_{\mathbf{i}}^{2}\right)  ^{\frac{1}{2}}.
\end{align*}
Observing that
\begin{align*}
\sum_{\mathbf{i}\in\mathcal{I}_{j}}b_{\mathbf{i}}^{\left(  j\right)
}k_{\mathbf{i}}^{2}\left(  \theta,\alpha\right)   &  =\left(  \sum_{k=0}%
^{N-1}\cos^{2}\left(  k\theta+\alpha\right)  \right)  ^{j}\\
&  =:\Psi_{N}^{j}\left(  \theta,\alpha\right)
\end{align*}
we have%
\begin{align*}
B_{L}\left(  x\right)   &  \lesssim\sum_{\left(  \theta,\alpha\right)
\in\Omega_{L,K}}e^{-\frac{\varrho\sqrt{N}x}{C_{K}}}\\
&  \sum_{j=0}^{N_{1}}\frac{\varrho^{j}\Psi_{N}^{\frac{j}{2}}\left(
\theta,\alpha\right)  }{j!}\left(  \sum_{\mathbf{i}\in\mathcal{I}_{j}%
}b_{\mathbf{i}}^{\left(  j\right)  }M_{\mathbf{i}}^{2}\right)  ^{\frac{1}{2}}.
\end{align*}
Next, we can replace the squared moments in terms of the distance distribution
of the (in general nonlinear) code \cite[Proof of Thm. 6.16]%
{litsyn_06_inf,litsyn_07}, i.e.%
\[
\sum_{\mathbf{i}\in\mathcal{I}_{j}}b_{\mathbf{i}}^{\left(  j\right)  }\left(
\sum_{\mathbf{c}\in\mathcal{C}}\prod_{k=0}^{N-1}c_{k}^{i_{k}}\right)
^{2}=M_{1}\sum_{k=0}^{N}\left(  N-2k\right)  ^{j}W_{k}^{%
{{}^\circ}%
},
\]
and therefore%
\begin{align*}
B_{L}\left(  x\right)   &  \lesssim\sum_{\left(  \theta,\alpha\right)
\in\Omega_{L,K}}e^{-\frac{\varrho\sqrt{N}x}{C_{K}}}\\
&  \sum_{j=0}^{N_{1}}\frac{\varrho^{2j}\Psi_{N}^{j}\left(  \theta
,\alpha\right)  }{\left(  2j\right)  !}\left(  \frac{1}{M_{1}}\sum_{k=0}%
^{N}\left(  N-2k\right)  ^{2j}W_{k}^{%
{{}^\circ}%
}\right)  ^{\frac{1}{2}}%
\end{align*}
since the distance distribution is symmetric. Again applying Cauchy-Schwartz's
inequality yields
\begin{align*}
&  \sum_{j=0}^{N_{1}}\frac{\varrho^{2j}\left[  \Psi_{N}\left(  \theta
,\alpha\right)  \right]  ^{\frac{3j}{4}}}{\left(  2j\right)  !}\left(
\frac{\left[  \Psi_{N}\left(  \theta,\alpha\right)  \right]  ^{\frac{j}{2}}%
}{M_{1}}\sum_{k=0}^{N}\left(  N-2k\right)  ^{2j}W_{k}^{%
{{}^\circ}%
}\right)  ^{\frac{1}{2}}\\
&  \leq\left(  \sum_{j=0}^{N_{1}}\frac{\varrho^{2j}\left(  \left[  \Psi
_{N}\left(  \theta,\alpha\right)  \right]  ^{\frac{3}{4}}\right)  ^{2j}%
}{\left(  2j\right)  !}\right)  ^{\frac{1}{2}}\\
&  \left(  \sum_{j=0}^{N_{1}}\frac{\varrho^{2j}}{M_{1}\left(  2j\right)
!}\sum_{k=0}^{N}\left(  \left[  \Psi_{N}\left(  \theta,\alpha\right)  \right]
^{\frac{1}{4}}\right)  ^{2j}\left(  N-2k\right)  ^{2j}W_{k}^{%
{{}^\circ}%
}\right)  ^{\frac{1}{2}}.
\end{align*}
Since $N_{1}$ is arbitrary, we have%
\[
\sum_{j=0}^{N_{1}}\frac{\varrho^{2j}\left(  \left[  \Psi_{N}\left(
\theta,\alpha\right)  \right]  ^{\frac{3}{4}}\right)  ^{2j}}{\left(
2j\right)  !}\underset{N_{1}\rightarrow\infty}{\rightarrow}\cosh\left(
\varrho\left[  \Psi_{N}\left(  \theta,\alpha\right)  \right]  ^{\frac{3}{4}%
}\right)  ,
\]
and%
\begin{align*}
&  \sum_{j=0}^{N_{1}}\frac{\varrho^{2j}}{M_{1}\left(  2j\right)  !}\sum
_{k=0}^{N}\left(  \left[  \Psi_{N}\left(  \theta,\alpha\right)  \right]
^{\frac{1}{4}}\right)  ^{2j}\left(  N-2k\right)  ^{2j}W_{k}^{%
{{}^\circ}%
}\underset{N_{1}\rightarrow\infty}{\rightarrow}\\
&  \frac{1}{M_{1}}\sum_{k=0}^{N}\cosh\left(  \varrho\left[  \Psi_{N}\left(
\theta,\alpha\right)  \right]  ^{\frac{1}{4}}\left(  N-2k\right)  \right)
W_{k}^{%
{{}^\circ}%
}%
\end{align*}
for any $\varrho>0,L>1,K>2$. The final result follows from $\Psi_{N}%
^{j}\left(  \theta,\alpha\right)  \leq N^{j}$ and invoking Lemma
\ref{lemma:help} to lift the proof to the non-symmetric case. The additional
factor $\frac{1}{2}$ is due to the real BPSK symbols where $|S_{\mathbf{c}%
}\left(  \theta\right)  |=|S_{\mathbf{c}}\left(  2\pi-\theta\right)  |$.
\end{proof}

We can improve on the result by assuming linearity of the code. The following
theorem relies on the fact that moments of linear codes are non-negative which
generalizes \cite[Thm. 6.13]{litsyn_06_inf,litsyn_07}.

\begin{theorem}
\label{theorem:cdf_code_lin} Let $\mathcal{C}$ be a linear, binary code. Then
the CCDF of the CF is upperbounded by
\[
B_{L}\left(  x\right)  \leq\min_{\varrho>0}\sum_{k=0}^{N}\frac{f^{\ast\ast
}\left(  \varrho,x\right)  W_{k}\cosh\left(  \varrho\left(  N-2k\right)
\right)  }{M_{1}}%
\]
where%
\[
f^{\ast\ast}\left(  \varrho,x\right)  :=\min_{K>2}2LNK\exp\left(
-\frac{\varrho\sqrt{N}x}{C_{K}}\right)  .
\]

\end{theorem}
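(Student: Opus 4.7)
My plan is to parallel the proof of Theorem~\ref{theorem:cdf_code_nl} step by step, but to exploit linearity of $\mathcal{C}$ to bypass the Cauchy-Schwartz step that was the source of the parasitic $N^{1/4}$ and $N^{3/4}$ factors. As before, I first reduce to the case where $\mathcal{C}$ is linear \emph{and} contains the all-one codeword, which makes the weight distribution symmetric; at the end I lift to arbitrary linear $\mathcal{C}$ via Lemma~\ref{lemma:help}, which is what produces the leading factor $2$ in $f^{\ast\ast}$. Under this symmetry the code is closed under $\mathbf{c}\mapsto-\mathbf{c}$, hence $\sqrt{N}\,S_{\mathbf{c}}(\theta,\alpha)=\sum_k c_k\cos(k\theta+\alpha)$ is symmetric as a random variable on $\mathcal{C}$ and all its odd moments vanish, giving $\mathbb{E}[e^{\varrho\sqrt{N}S_{\mathbf{c}}(\theta,\alpha)}]=\mathbb{E}[\cosh(\varrho\sqrt{N}S_{\mathbf{c}}(\theta,\alpha))]$. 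Starting from the projection/union bound together with Chernoff and Taylor-expanding $\cosh$, each term reduces to the even moment $\sum_{\mathbf{i}\in\mathcal{I}_{2m}}b_{\mathbf{i}}^{(2m)}k_{\mathbf{i}}(\theta,\alpha)M_{\mathbf{i}}$ that already appeared in the proof of Theorem~\ref{theorem:cdf_code_nl}.

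The key new ingredient is the observation that, for a linear binary code, each moment $M_{\mathbf{i}}=\mathbb{E}[\prod_k c_k^{i_k}]$ depends only on the parity vector $\mathbf{i}\bmod 2$ and equals $1$ if the corresponding support lies in $\mathcal{C}^\perp$ and $0$ otherwise; in particular $M_{\mathbf{i}}\ge 0$. Combined with $b_{\mathbf{i}}^{(2m)}\ge 0$ and the trivial envelope $|k_{\mathbf{i}}(\theta,\alpha)|\le 1$, this gives, term by term,
\[
\sum_{\mathbf{i}\in\mathcal{I}_{2m}} b_{\mathbf{i}}^{(2m)} k_{\mathbf{i}}(\theta,\alpha) M_{\mathbf{i}} \;\le\; \sum_{\mathbf{i}\in\mathcal{I}_{2m}} b_{\mathbf{i}}^{(2m)} M_{\mathbf{i}} \;=\; \mathbb{E}\!\left[\left(\sum_k c_k\right)^{\!2m}\right] \;=\; \frac{1}{M_1}\sum_{k=0}^{N} W_k (N-2k)^{2m},
\]
where the middle equality is the multinomial identity and the last one groups codewords by their Hamming weight. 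Resumming the $\cosh$ series yields, uniformly in $(\theta,\alpha)$,
\[
\mathbb{E}\!\left[\cosh\!\left(\varrho\sqrt{N}\,S_{\mathbf{c}}(\theta,\alpha)\right)\right] \;\le\; \frac{1}{M_1}\sum_{k=0}^{N} W_k \cosh\!\left(\varrho(N-2k)\right),
\]
with no residual $\Psi_N(\theta,\alpha)$ factor, so none of the $N^{3/4}$/$N^{1/4}$ bookkeeping of Theorem~\ref{theorem:cdf_code_nl} is needed.

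It then remains to plug this $(\theta,\alpha)$-independent upper bound into the outer union bound over $\Omega_{L,K}$, use $|\Omega_{L,K}|=LNK$ together with the $e^{-\varrho\sqrt{N}x/C_K}$ factor coming from Chernoff to assemble $f^{\ast\ast}(\varrho,x)$ after minimizing over $K$, and finally invoke Lemma~\ref{lemma:help} to pass from linear codes containing the all-one codeword to arbitrary linear codes, which contributes the extra factor of $2$. The step I expect to be the main obstacle is the one-sided termwise bound in the first display: one has to recognize that the envelope $|k_{\mathbf{i}}|\le 1$ coupled with $M_{\mathbf{i}}\ge 0$---the true content of ``linearity'' here---suffices on its own to kill both the outer square root and the $N^{1/4}$ factor inside the $\cosh$ of Theorem~\ref{theorem:cdf_code_nl}. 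Once this inequality is accepted, the rest of the argument is routine reuse of the machinery already deployed for Theorem~\ref{theorem:cdf_code_nl}.
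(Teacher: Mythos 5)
Your proposal is correct and follows essentially the route the paper intends: the paper's own proof is deferred to Lemma~\ref{lemma:help} plus the argument of \cite{litsyn_06_inf} (Thm.~6.13), whose key ingredient is exactly the non-negativity of the moments $M_{\mathbf{i}}$ of a linear code that you identify and exploit, together with $|k_{\mathbf{i}}|\le 1$, to avoid the Cauchy--Schwartz losses of Theorem~\ref{theorem:cdf_code_nl}. Your reconstruction of the omitted details (character-sum evaluation of $M_{\mathbf{i}}$, termwise bound, resummation of the $\cosh$ series, symmetrization via Lemma~\ref{lemma:help}) is sound.
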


\begin{proof}
The proof uses Lemma \ref{lemma:help} and invokes the same proof steps as in
\cite{litsyn_06_inf} which are omitted.
\end{proof}

The applicability of the latter theorems is ensured by the following result.

\begin{theorem}
\label{theorem:cdf_codes} Suppose that for all $W_{k}^{%
{{}^\circ}%
}$ there is a constant $C_{w}$ independent of $k$ so that%
\begin{equation}
W_{k}^{%
{{}^\circ}%
}\leq\left(  1+C_{w}\right)  \frac{1}{2^{N-k_{b}}}\binom{N}{k}%
.\label{eqn:w_cond}%
\end{equation}
Then the CF is upperbounded by:%
\[%
\begin{array}
[c]{ll}%
B_{L}\left(  x\right)  \leq2\left(  1+C_{w}\right)  LKN\exp\left(
-\frac{x^{2}}{2C_{K}^{2}\sqrt{N}}\right)   & \text{nonlinear }\mathcal{C}%
\text{ }\\
B_{L}\left(  x\right)  \leq2\left(  1+C_{w}\right)  LKN\exp\left(
-\frac{x^{2}}{2C_{K}^{2}}\right)   & \text{linear }\mathcal{C}%
\end{array}
\]

\end{theorem}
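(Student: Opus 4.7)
The plan is to substitute the hypothesis on the distance distribution into Theorems~\ref{theorem:cdf_code_nl} and~\ref{theorem:cdf_code_lin}, collapse the resulting $k$-sum in closed form by a binomial identity, and then optimize the Chernoff parameter $\varrho$ using a sub-Gaussian estimate on $\cosh$.

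The central tool is the identity
\[
\sum_{k=0}^{N} \binom{N}{k} \cosh\!\bigl(a(N-2k)\bigr) \;=\; 2^{N}\cosh^{N}(a), \qquad a \in \mathbb{R},
\]
which is just $\mathbb{E}[\cosh(a\sum_{i} X_{i})] = \cosh^{N}(a)$ for i.i.d.\ $\pm 1$ variables $X_{i}$, rescaled by $2^{N}$. Since $M_{1} = 2^{k_{b}}$, assumption (\ref{eqn:w_cond}) lets me replace $W_{k}^{\circ}/M_{1}$ everywhere by $(1+C_{w})\binom{N}{k}/2^{N}$, at which point the $k$-sum collapses to $(1+C_{w})\cosh^{N}(a)$. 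In Theorem~\ref{theorem:cdf_code_lin} this corresponds to $a = \varrho$ and the bound reduces to
\[
B_{L}(x) \leq 2(1+C_{w}) L N K \inf_{\varrho > 0} \exp\!\bigl(-\varrho \sqrt{N}\,x/C_{K}\bigr)\, \cosh^{N}(\varrho).
\]
In Theorem~\ref{theorem:cdf_code_nl} the same substitution gives the analogous expression with $a = \varrho N^{1/4}$, additionally carrying the factor $\sqrt{\cosh(\varrho N^{3/4})}$ inside an outer square root.

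The last step is the sub-Gaussian bound $\cosh(z) \leq e^{z^{2}/2}$ followed by elementary calculus in $\varrho$. For the linear case the exponent becomes $-\varrho \sqrt{N}\,x/C_{K} + N\varrho^{2}/2$, minimized at $\varrho^{*} = x/(C_{K}\sqrt{N})$, which yields $\exp(-x^{2}/(2 C_{K}^{2}))$ exactly as claimed. For the nonlinear case, using $\cosh^{N}(\varrho N^{1/4}) \leq \exp(\varrho^{2} N^{3/2}/2)$ and $\sqrt{\cosh(\varrho N^{3/4})} \leq \exp(\varrho^{2} N^{3/2}/4)$ under the outer square root produces an exponent of the form $-\varrho\sqrt{N}\,x/(2C_K) + c\,\varrho^{2} N^{3/2}$; minimizing at $\varrho^{*}$ of order $x/N$ then yields an exponent proportional to $-x^{2}/(C_{K}^{2}\sqrt{N})$, with the $\sqrt{N}$ loss being the direct signature of the $N^{1/4}$ and $N^{3/4}$ scalings baked into Theorem~\ref{theorem:cdf_code_nl}. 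Taking the min over $K$ at the end is trivial since $K$ appears only in the benign prefactor $2LNK$ and inside $C_{K}$.

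The one mildly delicate point is the nonlinear case, where the two differently-scaled $\cosh$ factors must be tracked coherently through the outer square root and balanced against the linear Chernoff gain. The role of linearity in Theorem~\ref{theorem:cdf_code_lin} is precisely to eliminate that outer square root and the extra $N^{1/4}$ weight inside $\cosh$, which is why the linear line of the theorem enjoys the sharper exponent $-x^{2}/(2C_{K}^{2})$ with no $\sqrt{N}$ penalty.
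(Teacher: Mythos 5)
Your overall route is the same as the paper's: substitute the assumed binomial bound on $W_{k}^{\circ}/M_{1}$ (using $M_{1}=2^{k_{b}}$, so the right-hand side of (\ref{eqn:w_cond}) divided by $M_{1}$ is $(1+C_{w})\binom{N}{k}/2^{N}$) into Theorems~\ref{theorem:cdf_code_nl} and \ref{theorem:cdf_code_lin}, collapse the $k$-sum with $\sum_{k}\binom{N}{k}\cosh(a(N-2k))=2^{N}\cosh^{N}(a)$, apply $\cosh(z)\leq e^{z^{2}/2}$, and optimize $\varrho$. Your linear case is complete and lands exactly on $\exp(-x^{2}/(2C_{K}^{2}))$ with $\varrho^{\ast}=x/(C_{K}\sqrt{N})$; the paper defers that case to the literature, so you have in fact supplied more detail there.

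The one place where your argument does not deliver the stated bound is the constant in the nonlinear exponent. Reading Theorem~\ref{theorem:cdf_code_nl} literally, with $f^{\ast}$ \emph{inside} the outer square root, the Chernoff term is halved to $-\varrho\sqrt{N}x/(2C_{K})$, while your two estimates $\cosh^{N}(\varrho N^{1/4})\leq\exp(\varrho^{2}N^{3/2}/2)$ and $\sqrt{\cosh(\varrho N^{3/4})}\leq\exp(\varrho^{2}N^{3/2}/4)$ contribute, after the outer root, $\exp(3\varrho^{2}N^{3/2}/8)$; minimizing $-\varrho\sqrt{N}x/(2C_{K})+3\varrho^{2}N^{3/2}/8$ gives $\exp(-x^{2}/(6C_{K}^{2}\sqrt{N}))$, not $\exp(-x^{2}/(2C_{K}^{2}\sqrt{N}))$. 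Your ``proportional to'' hedge concedes exactly this gap. The paper's own computation keeps the full factor $2LNK\exp(-\varrho\sqrt{N}x/C_{K})$ \emph{outside} the square root --- which is what the proof of Theorem~\ref{theorem:cdf_code_nl} actually produces, its statement notwithstanding --- so the quantity to optimize is $-\varrho\sqrt{N}x/C_{K}+\varrho^{2}N^{3/2}/2$ (the two $\exp(\varrho^{2}N^{3/2}/4)$ factors combined), whose minimum at $\varrho=x/(C_{K}N)$ is exactly $-x^{2}/(2C_{K}^{2}\sqrt{N})$. (The paper itself is loose here: it carries only one of the two factors and sets $\varrho=2x/(C_{K}N)$, arriving at the even stronger $-x^{2}/(C_{K}^{2}\sqrt{N})$.) So to recover the constant $1/2$ as claimed you must use the bound in the form derived in the proof of Theorem~\ref{theorem:cdf_code_nl}, with the prefactor outside the root; as written, your optimization establishes only the correct order in $N$ and $x$ with a weaker constant.
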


\begin{proof}
We omit the proof for the linear part which follows directly from from
\cite{litsyn_06_inf}. In the nonlinear case by Theorem
\ref{theorem:cdf_code_nl} and by virtue of%
\begin{align*}
\sqrt{\sum_{k=0}^{N}\binom{N}{k}\frac{\cosh\left(  \varrho N^{\frac{1}{4}%
}\left(  N-2k\right)  \right)  }{2^{N}}} &  \leq\left(  \cosh\left(
N^{\frac{1}{4}}\varrho\right)  \right)  ^{\frac{N}{2}}\\
&  \leq\exp\left(  \frac{\varrho^{2}N^{\frac{3}{2}}}{4}\right)  ,
\end{align*}
and since%
\[
\sqrt{\cosh\left(  \varrho N^{\frac{3}{4}}\right)  }\leq\exp\left(
\frac{\varrho^{2}N^{\frac{3}{2}}}{4}\right)  ,
\]
we obtain%
\[
B_{L}\left(  x\right)  \leq2\left(  1+C_{w}\right)  LNK\exp\left(
-\frac{\varrho x\sqrt{N}}{C_{K}}\right)  \exp\left(  \frac{\varrho^{2}%
N^{\frac{3}{2}}}{4}\right)  .
\]
Setting $\varrho=\frac{2x}{C_{K}N}$ yields%
\[
\bar{F}_{cf}^{\mathcal{C}}\left(  x\right)  \leq2\left(  1+C_{w}\right)
LNK\exp\left(  -\frac{x^{2}}{C_{K}^{2}\sqrt{N}}\right)  .
\]
The asymptotics follow immediately then.
\end{proof}

\section{Alternative Metrics:\ Upper Bounds}

In this section we use Theorems \ref{theorem:cdf_code_nl},
\ref{theorem:cdf_code_lin} to obtain upper bounds for the alternative metrics.
Let us define the random variable%
\begin{equation}
N_{\mathbf{c}}^{\left(  L\right)  }\left(  \lambda\right)  :=\left\vert
\left\{  l:\left\vert S_{\mathbf{c}}\left(  \theta_{l,L}\right)  \right\vert
>\lambda,l=0,\ldots,LN-1\right\}  \right\vert
\end{equation}
counting the number of samples which exceed a level $\lambda$. In the
following theorem we apply a balancing technique between the CCDF of the CF
and the number of samples that exceed a given level. For ease of presentation
we set $L=1$.

We start with the uncoded BPSK case where a tighter bound can be obtained.
Note that BPSK is a linear code with $C_{w}=0$ in Theorem
\ref{theorem:cdf_codes}.

\begin{theorem}
\label{theorem:ser_bound} Assume $\mathcal{Q}:=\{-1,1\}=$ \textit{BPSK} and
suppose $S_{\mathbf{c}}$ is clipped at the level $\lambda$. For any given
performance metric $h$ which is increasing in its argument, the average
distortion is upperbounded by
\begin{align*}
&  \Pr\left(  \sum_{l=0}^{N-1}h\left(  \left\vert D_{\mathbf{c}}\left(
\theta_{l}\right)  \right\vert \right)  >x\right) \\
&  \leq\min_{\mu>\lambda}\left[  B_{1}\left(  \mu\right)  +\left[
B_{1}\left(  \lambda\right)  +B_{1}^{2}\left(  \lambda\right)  \right]
\frac{h^{2}\left(  \mu-\lambda\right)  }{x^{2}}\right]  .
\end{align*}

\end{theorem}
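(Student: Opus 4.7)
The inequality has the flavour of a classical truncation-and-second-moment-Markov argument, with the auxiliary parameter $\mu>\lambda$ acting as the truncation level. The plan is to split the event $\{\sum_{l=0}^{N-1} h(|D_{\mathbf{c}}(\theta_{l})|)>x\}$ according to whether the Nyquist-rate crest-factor of $S_{\mathbf{c}}$ stays below $\mu$ or not. On $\{\text{\textit{CF}}_{1}(S_{\mathbf{c}})>\mu\}$ the contribution is absorbed for free into $B_{1}(\mu)$, which delivers the first summand of the stated bound without any further work.

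On the complementary event $\{\text{\textit{CF}}_{1}(S_{\mathbf{c}})\leq\mu\}$ the SEL definition of $D_{\mathbf{c}}$ forces $|D_{\mathbf{c}}(\theta_{l})|=(|S_{\mathbf{c}}(\theta_{l})|-\lambda)_{+}\leq\mu-\lambda$, and $|D_{\mathbf{c}}(\theta_{l})|=0$ whenever the sample is not clipped. Using monotonicity of $h$ together with the natural convention $h(0)=0$ for a distortion metric, the sum collapses to
\[
\sum_{l=0}^{N-1}h(|D_{\mathbf{c}}(\theta_{l})|)\;\leq\;h(\mu-\lambda)\,N_{\mathbf{c}}^{(1)}(\lambda),
\]
so this piece is dominated by $\Pr\bigl(N_{\mathbf{c}}^{(1)}(\lambda)>x/h(\mu-\lambda)\bigr)$.

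A Chebyshev-style second-moment Markov inequality $\Pr(N_{\mathbf{c}}^{(1)}(\lambda)>t)\leq \mathbb{E}[(N_{\mathbf{c}}^{(1)}(\lambda))^{2}]/t^{2}$ yields precisely the $h^{2}(\mu-\lambda)/x^{2}$ factor in the claim, so everything reduces to proving
\[
\mathbb{E}\bigl[(N_{\mathbf{c}}^{(1)}(\lambda))^{2}\bigr]\;\leq\;B_{1}(\lambda)+B_{1}^{2}(\lambda),
\]
which is the technical core. Writing $N_{\mathbf{c}}^{(1)}(\lambda)=\sum_{l}Y_{l}$ with $Y_{l}=\mathbf{1}_{\{|S_{\mathbf{c}}(\theta_{l})|>\lambda\}}$, the diagonal $\sum_{l}\Pr(Y_{l}=1)$ is controlled by $B_{1}(\lambda)$ by combining the union bound $B_{1}(\lambda)\leq N\Pr(Y_{0}=1)$ with the BPSK symmetry that makes all per-sample probabilities equal. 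The real obstacle is the off-diagonal contribution $\sum_{l\neq l'}\Pr(Y_{l}Y_{l'}=1)$; my plan is to tackle it by Cauchy--Schwarz, exploiting that $\{Y_{l}=Y_{l'}=1\}$ always implies ``at least two Nyquist samples exceed $\lambda$'', which lets one tie the pair probabilities back to $B_{1}(\lambda)$ rather than to the per-sample probability and hence extract the extra factor of $B_{1}$. Minimising over $\mu>\lambda$ at the end glues the two contributions into the stated bound.
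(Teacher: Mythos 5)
Your overall architecture --- splitting on $\{\text{\textit{CF}}_{1}(S_{\mathbf{c}})>\mu\}$, collapsing the distortion sum to $h(\mu-\lambda)\,N_{\mathbf{c}}^{(1)}(\lambda)$ on the complement, and applying Markov's inequality to the squared count --- is exactly the paper's proof. The gap is in the technical core, the bound $\mathbb{E}\bigl[(N_{\mathbf{c}}^{(1)}(\lambda))^{2}\bigr]\leq B_{1}(\lambda)+B_{1}^{2}(\lambda)$, and specifically in the off-diagonal term. Cauchy--Schwarz gives $\Pr(Y_{l}=Y_{l'}=1)\leq\sqrt{\Pr(Y_{l}=1)\Pr(Y_{l'}=1)}=\Pr(Y_{0}=1)$, i.e.\ the geometric mean of the two probabilities, not their product; summing over the $N(N-1)$ ordered pairs then yields roughly $N\cdot B_{1}(\lambda)$, which exceeds the claimed $B_{1}^{2}(\lambda)$ by a factor of order $N/B_{1}(\lambda)$. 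Your fallback observation that $\{Y_{l}=Y_{l'}=1\}$ implies the maximum exceeds $\lambda$ is no better: it gives $\Pr(Y_{l}=Y_{l'}=1)\leq B_{1}(\lambda)$ per pair, hence $N^{2}B_{1}(\lambda)$ in total. To obtain the product $\Pr(Y_{l}=1)\Pr(Y_{l'}=1)$ you need approximate pairwise independence of the exceedance events, and this is where the paper actually does the work: it bounds the joint exponential moment $\mathbb{E}\bigl[e^{\varrho(S_{\mathbf{c}}(\theta_{l},\alpha)+S_{\mathbf{c}}(\theta_{l'},\alpha'))}\bigr]$ directly, using the independence of the BPSK symbols $c_{k}$ (so the moment generating function factorizes into a product of $\cosh$ terms) together with the near-orthogonality $\sum_{k}\cos(k\theta_{l}+\alpha)\cos(k\theta_{l'}+\alpha')\approx 0$ of distinct lattice samples, so that the pair Chernoff bound equals the square of the single-sample Chernoff bound, yielding the $N(N-1)K^{2}e^{2(\varrho^{2}N/2-\varrho\lambda\sqrt{N}/C_{K})}$ term in the paper's display. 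This BPSK-specific factorization is the missing idea, and it is precisely why the theorem is stated only for BPSK while the general-code version (Theorem \ref{theorem:ser_bound_general}) retreats to a first-moment bound.

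A secondary point: your diagonal step is justified backwards. The union bound reads $B_{1}(\lambda)\leq\sum_{l}\Pr(Y_{l}=1)=N\Pr(Y_{0}=1)$, so it lower-bounds the diagonal sum by $B_{1}(\lambda)$ rather than upper-bounding it. What is true --- and what the paper implicitly does --- is that both $N\Pr(Y_{0}=1)$ and $B_{1}(\lambda)$ are dominated by the same Chernoff/union-bound surrogate $NK\exp\bigl(\varrho^{2}N/2-\varrho\lambda\sqrt{N}/C_{K}\bigr)$, and the ``$B_{1}$'' in the statement must be read as that surrogate (which is also how the bound is consumed in the Applications section). With that reading the diagonal is fine, but as literally written your step does not follow.
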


\begin{proof}
Define the event
\[
\mathcal{A}:=\left\{  \sum_{l=0}^{LN-1}h\left(  \left\vert D_{\mathbf{c}%
}\left(  \theta_{l}\right)  \right\vert \right)  >x\right\}  .
\]
$\mathcal{A}$ can be partioned into disjoint events $\mathcal{A}\cap\left\{
\text{\textit{CF}}_{L}\left(  S_{\mathbf{c}}\right)  >\mu\right\}  $ or
$\mathcal{A}\cap\left\{  \text{\textit{CF}}_{L}\left(  S_{\mathbf{c}}\right)
\leq\mu\right\}  $. Thus
\begin{equation}
\Pr\left(  \mathcal{A}\right)  \leq B_{L}^{\ast}\left(  \mu\right)
+\Pr\left(  \mathcal{A}\cap\left\{  \text{\textit{CF}}_{L}\left(
S_{\mathbf{c}}\right)  \leq\mu\right\}  \right)  .
\end{equation}
Next we need to calculate the term $\Pr\left(  \mathcal{A}\cap\left\{
\text{\textit{CF}}_{L}\left(  S_{\mathbf{c}}\right)  \leq\mu\right\}  \right)
$. Clearly the event $\mathcal{A}\cap\left\{  \text{\textit{CF}}_{L}\left(
S_{\mathbf{c}}\right)  \leq\mu\right\}  $ is contained in the event
\[
\left\{  \left\{  N_{\mathbf{c}}^{\left(  L\right)  }\left(  \lambda\right)
\cdot\max_{0\leq l<LN}h\left(  \left\vert D_{\mathbf{c}}\left(  \theta
_{l,L}\right)  \right\vert \right)  \geq x\right\}  \cap\left\{
\text{\textit{CF}}_{L}\left(  S_{\mathbf{c}}\right)  \leq\mu\right\}
\right\}
\]
which itself is within the event%
\[
\left\{  N_{\mathbf{c}}^{\left(  L\right)  }\left(  \lambda\right)  \cdot
h\left(  \mu-\lambda\right)  \geq x\right\}  .
\]
Hence, we take the unconstrained number of points exceeding $\lambda$ but lift
up the level that is needed for a countable event. Writing%
\begin{align*}
N_{\mathbf{c}}\left(  \lambda,L\right)   &  =\sum_{l=0}^{LN-1}\mathbb{I}%
\left\{  \left\vert S_{\mathbf{c}}\left(  \theta_{l,L}\right)  \right\vert
>\lambda\right\}  \\
&  \leq\sum_{\left(  \theta,\alpha\right)  \in\Omega_{L,K}}\mathbb{I}\left\{
S_{\mathbf{c}}\left(  \theta,\alpha\right)  >\frac{\lambda}{C_{K}}\right\}
\end{align*}
and by Markov's inequality applied to the squared term and evaluating the
exponential moments thereby using the inherent structure of $S_{\mathbf{c}%
}\left(  \theta,\alpha\right)  $ \cite{wunder_03_inf} yields%
\begin{align*}
&  \Pr\left(  N_{\mathbf{c}}^{2}\left(  \lambda,L\right)  >x^{2}\right)  \\
&  \leq\frac{NKe^{\frac{\varrho^{2}N}{2}-\frac{\varrho\lambda\sqrt{N}}{C_{K}}%
}+N\left(  N-1\right)  K^{2}e^{\frac{2\varrho^{2}N}{2}-\frac{2\varrho
\lambda\sqrt{N}}{C_{K}}}}{x^{2}}.
\end{align*}
Due to lack of space we omit the details. \begin{figure}[th]
\begin{center}
\psfrag{lambda}{$\lambda$}\psfrag{mu}{$\mu$}
\includegraphics[width=6cm]{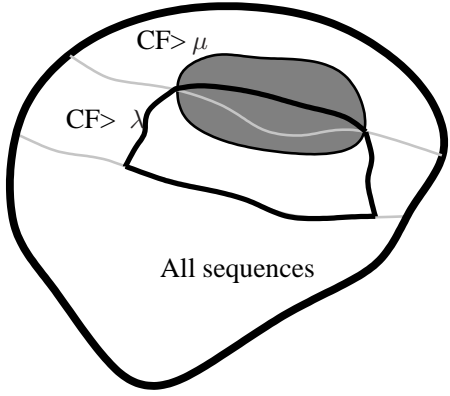}
\end{center}
\caption{Illustration of bounding method: the shaded area is the set of which
the probability measure is to be bounded. The upper bound is by adjusting the
$\mu$ level. The bold encircled area is the set where a sufficient number of
points crosses the $\lambda$ level.}%
\end{figure}
\end{proof}

The following theorem holds for any binary code.

\begin{theorem}
\label{theorem:ser_bound_general} Suppose $S_{\mathbf{c}}$ is clipped at the
level $\lambda$. For any given performance metric $h$ which is increasing in
its argument, the average distortion is upperbounded by
\begin{align*}
& \Pr\left(  \sum_{l=0}^{N-1}h\left(  \left\vert D_{\mathbf{c}}\left(
\theta_{l}\right)  \right\vert \right)  >x\right)  \\
& \leq\min_{\mu>\lambda}\left[  B_{1}\left(  \mu\right)  +B_{1}\left(
\lambda\right)  \frac{h\left(  \mu-\lambda\right)  }{x}\right]  .
\end{align*}

\end{theorem}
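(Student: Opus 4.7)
The strategy mirrors the proof of Theorem \ref{theorem:ser_bound} but replaces its Chebyshev-type second-moment step by a direct first-moment (Markov) estimate; crucially, no special structure of BPSK is then needed, so the result holds for any binary code.

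First, I would set $\mathcal{A} := \{\sum_{l=0}^{N-1} h(|D_{\mathbf{c}}(\theta_l)|) > x\}$ and decompose
\[
\Pr(\mathcal{A}) \leq \Pr(\{\text{\textit{CF}}_1(S_{\mathbf{c}}) > \mu\}) + \Pr(\mathcal{A} \cap \{\text{\textit{CF}}_1(S_{\mathbf{c}}) \leq \mu\}),
\]
so that the first summand is directly bounded by $B_1(\mu)$.

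Second, I would use the SEL representation of $D_{\mathbf{c}}$: whenever $|S_{\mathbf{c}}(\theta_l)| \leq \lambda$ one has $D_{\mathbf{c}}(\theta_l)=0$, and otherwise $|D_{\mathbf{c}}(\theta_l)| = |S_{\mathbf{c}}(\theta_l)| - \lambda \leq \mu - \lambda$ on $\{\text{\textit{CF}}_1 \leq \mu\}$. Since $h$ is non-negative, increasing and (implicitly) vanishes at $0$, this yields
\[
\sum_{l=0}^{N-1} h(|D_{\mathbf{c}}(\theta_l)|) \leq N_{\mathbf{c}}^{(1)}(\lambda)\, h(\mu-\lambda)
\]
on that event, reducing the second term to $\Pr(N_{\mathbf{c}}^{(1)}(\lambda)\, h(\mu-\lambda) \geq x)$. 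Then I would apply the first-moment Markov inequality,
\[
\Pr\!\left(N_{\mathbf{c}}^{(1)}(\lambda) \geq \tfrac{x}{h(\mu-\lambda)}\right) \leq \frac{\mathbb{E}[N_{\mathbf{c}}^{(1)}(\lambda)]\, h(\mu-\lambda)}{x},
\]
and finally identify $\mathbb{E}[N_{\mathbf{c}}^{(1)}(\lambda)]$ with (a suitable multiple of) $B_1(\lambda)$, exploiting that $\{N_{\mathbf{c}}^{(1)}(\lambda) \geq 1\} = \{\text{\textit{CF}}_1 > \lambda\}$ and routing the count through the projection lattice $\Omega_{1,K}$ as in the general CCDF arguments of Theorems \ref{theorem:cdf_code_nl}--\ref{theorem:cdf_code_lin}. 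Minimising the resulting bound over $\mu > \lambda$ gives the stated inequality.

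The main obstacle will be the last step. For BPSK, the proof of Theorem \ref{theorem:ser_bound} evaluates $\mathbb{E}[(N_{\mathbf{c}}^{(1)})^2]$ explicitly through Chernoff-type exponential moments, exploiting the signed-sum structure of $S_{\mathbf{c}}(\theta,\alpha)$. For a general binary code no such closed form is available, so one must settle for the cruder first-moment control tied to $B_1(\lambda)$; this is precisely why the bound here is linear in $h(\mu-\lambda)/x$ rather than quadratic as in Theorem \ref{theorem:ser_bound}. Care is needed to keep the projection constants from swamping the clean form of the statement, and I would double-check that the identification of $\mathbb{E}[N_{\mathbf{c}}^{(1)}(\lambda)]$ with $B_1(\lambda)$ is carried out at the right level of the projection union bound so that no extraneous factors of $N$ survive.
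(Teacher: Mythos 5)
The paper gives no proof of Theorem~\ref{theorem:ser_bound_general} (it is stated bare after the BPSK case), so there is nothing to compare line by line; your reconstruction --- the decomposition along $\{\text{\textit{CF}}_{1}(S_{\mathbf{c}})>\mu\}$, the containment of $\mathcal{A}\cap\{\text{\textit{CF}}_{1}\leq\mu\}$ in $\{N_{\mathbf{c}}^{(1)}(\lambda)\,h(\mu-\lambda)\geq x\}$, and a first-moment Markov step in place of the second-moment one --- is clearly the intended argument, and everything up to and including the Markov inequality is fine.

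The gap is exactly at the step you flagged, and it is not a matter of bookkeeping: the identification of $\mathbb{E}[N_{\mathbf{c}}^{(1)}(\lambda)]$ with $B_{1}(\lambda)$ via $\{N_{\mathbf{c}}^{(1)}(\lambda)\geq 1\}=\{\text{\textit{CF}}_{1}>\lambda\}$ cannot work, because that identity gives the inequality in the \emph{wrong direction}: since $N_{\mathbf{c}}^{(1)}(\lambda)\geq\mathbb{I}\{N_{\mathbf{c}}^{(1)}(\lambda)\geq 1\}$, one always has $\mathbb{E}[N_{\mathbf{c}}^{(1)}(\lambda)]\geq B_{1}(\lambda)$, with ratio equal to the conditional mean number of exceedances, which can be as large as $N$ (a code for which one sample exceeding $\lambda$ forces many to do so). With $B_{1}$ read as the exact CCDF, the honest conclusion of your argument is
\[
\Pr\left(\mathcal{A}\cap\{\text{\textit{CF}}_{1}\leq\mu\}\right)\leq\frac{h(\mu-\lambda)}{x}\sum_{l=0}^{N-1}\Pr\left(\left\vert S_{\mathbf{c}}(\theta_{l})\right\vert>\lambda\right)\leq N\,B_{1}(\lambda)\,\frac{h(\mu-\lambda)}{x},
\]
an extra factor of up to $N$ relative to the claim. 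The way to rescue the stated form --- consistent with the proof of Theorem~\ref{theorem:ser_bound}, which writes $N_{\mathbf{c}}(\lambda,L)\leq\sum_{(\theta,\alpha)\in\Omega_{L,K}}\mathbb{I}\{S_{\mathbf{c}}(\theta,\alpha)>\lambda/C_{K}\}$, and with the Applications section, which immediately substitutes $NK\exp(-\lambda^{2}/(2C_{K}^{2}))$ for $B_{1}(\lambda)$ --- is to interpret $B_{1}(\lambda)$ throughout as its projection/union-bound majorant $\sum_{(\theta,\alpha)\in\Omega_{1,K}}\Pr(S_{\mathbf{c}}(\theta,\alpha)>\lambda/C_{K})$, which \emph{does} dominate $\mathbb{E}[N_{\mathbf{c}}^{(1)}(\lambda)]$ because it is a sum over all lattice points rather than a probability of a union. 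You should make that interpretation explicit; as literally stated, with $B_{1}$ the true CCDF, the inequality does not follow from this argument and I do not see how to prove it without the extra factor.
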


\section{Applications}

In this section we relate the theoretical results obtained so far to get
scaling result for alternative metrics. Suppose that we use a binary code and
that condition (\ref{eqn:w_cond}) is satisfied. Note that the condition holds
for uncoded transmission as well as many families of codes (such as the BCH
familiy \cite[pp. 158 ff.]{litsyn_07}). Suppose further we apply the AOM
metric in case of the SEL amplifier model. Setting
\[
h\left(  \left\vert D_{\mathbf{c}}\left(  \theta_{l}\right)  \right\vert
\right)  :=\frac{\left\vert D_{\mathbf{c}}\left(  \theta_{l}\right)
\right\vert ^{2}}{N}%
\]
then by Theorem \ref{theorem:ser_bound_general} we have (omitting constants
which have no effect on the asymptotic results)%
\begin{align*}
&  \Pr\left(  \sum_{l=0}^{LN-1}h\left(  \left\vert D_{\mathbf{c}}\left(
\theta_{l,L}\right)  \right\vert \right)  >x\right) \\
&  \leq\min_{\mu>\lambda}\left[  B_{1}\left(  \mu\right)  +B_{1}\left(
\lambda\right)  \frac{\left(  \mu-\lambda\right)  ^{2}}{N\cdot x}\right] \\
&  \leq\min_{\mu>\lambda}\left[  NKe^{-\frac{\mu^{2}}{2C_{K}^{2}}}%
+NKe^{-\frac{\lambda^{2}}{2C_{K}^{2}}}\frac{\left(  \mu-\lambda\right)  ^{2}%
}{N\cdot x}\right]  ,
\end{align*}
and setting
\begin{align*}
\lambda &  =\lambda_{N}=\sqrt{\left(  1+\varepsilon\right)  \log\left[
\log\left(  N\right)  \right]  }\\
\mu &  =\mu_{N}=\sqrt{\left(  1+\varepsilon\right)  \log\left(  N\right)  }%
\end{align*}
where $\varepsilon>0$ is a arbitralily small constant yields%
\[
\Pr\left(  \sum_{l=0}^{LN-1}\frac{\left\vert D_{\mathbf{c}}\left(
\theta_{l,L}\right)  \right\vert ^{2}}{LN}>x\right)  \rightarrow
0,\;N\rightarrow\infty
\]
for any fixed $x>0$. Hence we have the remarkable result that the clipping
level of the amplifier can be almost set to a constant and still the AOM
metric can be made arbitralily small.

\section{Conclusions}

In this paper we show that the standard design rule for the power amplifer in
OFDM transmission might be too conservative if alternative metrics are
considered. It is worth emphasizing that we do not claim that this is the
ultimate scaling as in practice other metrics might be important (e.g.
spectral regrowth). We provided an example and considered the amplifier
oriented metric recently proposed in practice. We have not yet considered new
algorithms (e.g. based on derandomization) for these metrics which is an
interesting extension of the results in this paper.

\bibliographystyle{IEEEbib}

\begin{thebibliography}{10}

\bibitem{litsyn_07}
S.~Litsyn,
\newblock {\em Peak Power Control in Multicarrier Communications},
\newblock Cambridge University Press, 2007.

\bibitem{correia_10}
{L. M. Correia, D. Zeller, O. Blume, D. Ferling, Y. Jading, I. Gódor, G. Auer,
  L. Van der Perre},
\newblock ``Challenges and enabling technologies for energy aware mobile radio
  networks,''
\newblock {\em IEEE Signal Processing Magazine}, vol. 48, no. 11, pp. 66--72,
  November 2010.

\bibitem{fischer_09_inf}
R.~F.~H. Fischer and C.~Siegl,
\newblock ``Reed-solomon and simplex codes for {PAPR} reduction in {OFDM},''
\newblock {\em IEEE Trans. Inform. Theory}, vol. 55, no. 4, pp. 1519--1528,
  April 2009.

\bibitem{siegl_2010}
C.~Siegl,
\newblock {\em Peak-to-average Power Ratio Reduction in Multi-antenna {OFDM}
  Via Multiple Signal Representation}, vol.~29,
\newblock Dissertation, Erlanger Berichte aus Informations- und
  Kommunikationstechnik, Shaker Verlag, Aachen, 2010.

\bibitem{pohl_07_ett}
H.~Boche and V.~Pohl,
\newblock ``Signal representation and approximation - fundamental limits,''
\newblock {\em European Trans. on Telecomm. (ETT)}, vol. 18, no. 5, pp.
  445--456, August 2007.

\bibitem{dinur_01_comm}
N.~Dinur and D.~Wulich,
\newblock ``Peak-to average power ratio in high-order {OFDM},''
\newblock {\em IEEE Trans. on Communications}, vol. 49, no. 6, pp. 1063--1072,
  June 2001.

\bibitem{ochiai_01_comm}
H.~Ochiai and H.~Imai,
\newblock ``On the distribution of the peak-to-average power ratio in {OFDM}
  signals,''
\newblock {\em IEEE Trans. on Communications}, vol. 49, no. 2, pp. 282--289,
  February 2001.

\bibitem{3gpp_04}
3GPP TSG RAN~WG1 37,
\newblock ``Comparison of {PAR} and cubic metric for power de-rating,''
\newblock in {\em Motorola Tdoc R1-040642}, May 2004.

\bibitem{paterson_00_inf2}
K.~G. Paterson and V.~Tarokh,
\newblock ``On the existence and construction of good codes with low
  peak-to-average power ratios,''
\newblock {\em IEEE Trans. Inform. Theory}, vol. 46, no. 6, pp. 1974--1986,
  September 2000.

\bibitem{wunder_02_itg}
G.~Wunder and H.~Boche,
\newblock ``A baseband model for computing the {PAPR} in {OFDM} systems,''
\newblock in {\em 4th International ITG Conference on Source and Channel
  Coding}, Berlin, January 2002, VDE, pp. 273--280, VDE-Verlag GmbH.

\bibitem{tarokh_00_comm}
V.~Tarokh and H.~Jafarkhani,
\newblock ``On the computation and reduction of the peak-to average power ratio
  in multicarrier communications,''
\newblock {\em IEEE Trans. on Comm.}, vol. 48, no. 1, pp. 37--44, January 2000.

\bibitem{litsyn_06_inf}
S.~Litsyn and G.~Wunder,
\newblock ``Generalized bounds on the {CF} distribution of {OFDM} signals with
  application to code design,''
\newblock {\em IEEE Trans. Inform. Theory}, vol. 52, no. 3, pp. 992--1006,
  March 2006.

\bibitem{wunder_03_inf}
G.~Wunder and H.~Boche,
\newblock ``New results on the statistical distribution of the crest-factor of
  {OFDM} signals,''
\newblock {\em IEEE Trans. Inform. Theory}, vol. 49, no. 2, pp. 488--494,
  February 2003.

\end{thebibliography}

\end{document}